\newtheorem{theorem}{Theorem}
\newtheorem{proposition}{Proposition}
\newtheorem{conjecture}{Conjecture}
\newdefinition{definition}{Definition}
\newcommand{\m}{\mathcal }
\newcommand{\bra}[1]{\langle #1 |}    
\newcommand{\ket}[1]{| #1 \rangle}    
\def\R{\mathbb R}    
\def\N{\mathbb N}    
\newcommand\tr{ \mbox{\rm Tr} }
\journal{Linear Algebra and its Applications}
\begin{document}
\begin{frontmatter}

\title{Extending a characterization of majorization to infinite dimensions}

\author[uog]{Rajesh Pereira}
\address[uog]{Department of Mathematics \& Statistics, University of Guelph, Guelph, ON, Canada N1G 2W1}
\ead{pereirar@uoguelph.ca}

\author[bu]{Sarah Plosker\corref{cor1}}
\address[bu]{Department of Mathematics \& Computer Science, Brandon University, Brandon, MB, Canada R7A 6A9}
\ead{ploskers@brandonu.ca}

\cortext[cor1]{Corresponding author}

\begin{abstract}
We consider recent work linking majorization and trumping, two partial orders that have proven useful with respect to the entanglement transformation problem in quantum information,  with general Dirichlet polynomials, Mellin transforms, and completely monotone sequences.  We extend a basic majorization result to the more physically realistic infinite-dimensional setting through the use of generalized Dirichlet series and  Riemann-Stieltjes integrals.
\end{abstract}

\begin{keyword} majorization \sep entanglement-assisted local transformation \sep Dirichlet series \sep Mellin transformations\sep completely monotone functions

\MSC 81P40\sep  81P68\sep  	11F66

\end{keyword}

\end{frontmatter}

\section{Introduction}

The problem of entanglement transformation concerns the ability to transform from one pure state of some composite system to another, using only  local operations and classical communication (LOCC). Such  manipulations of entangled states have been characterized by way of the partial order of majorization in the finite-dimensional setting \cite{Nie99} and the infinite-dimensional setting \cite{OBNM08}.

Quantum mechanics is inherently infinite-dimensional by nature; although much work in quantum information theory is done under the restriction of finite dimensions, it is desirable to generalize to the infinite-dimensional setting (such generalizations are often  highly non-trivial).  In light  of an extension of Nielsen's result \cite{Nie99} to the infinite-dimensional setting by way of $\epsilon$-convertibility for LOCC \cite{OBNM08}, we extend the majorization  result of \cite{PP13} to the infinite-dimensional setting. Herein, we view the characterization of majorization put forward in \cite{PP13}  as a pure math inequality result; consequently, we do not consider the physical ramifications of infinite-dimensional majorization.

There are several definitions for infinite-dimensional majorization; we shall use that discussed in \cite{OBNM08}  as it best fits the physical descriptions of infinite-dimensional quantum states. That is, since the majorization condition of Nielsen involves vectors of Schmidt coefficients of pure states, the vectors are necessarily in $\ell_1$, and therefore there is no need to consider, for example, A.~Neumann's definition \cite{Neu99}, which allows for vectors in $\ell_\infty$.  
 Because we are working with positive trace-class operators, the  sum of the eigenvalues  that we are considering converges  to 1,   which leads to the promising realization that our Dirichlet series are well-behaved. 

\section{Dirichlet Series, Mellin Transforms, and Completely Monotone Functions}

\begin{definition}
Let $I$ be a real interval.  A function $f$ is said to be \emph{completely monotone} on $I$ if $(-1)^nf^{(n)}(x)\geq 0$ for all $x\in I$ and all $n=0, 1, 2, \dots$.
\end{definition}

 Bernstein's theorem on completely monotone functions states that a necessary and sufficient condition for a function $f$ to be completely monotone on $(0, \infty)$ is that $f$ is the Laplace transform of a positive measure $\mu$:
\[
f(s) = \int_0^\infty e^{-st} \,d\mu(t).
\]

We recall that the Mellin transform of a function $f$ on $(0,\infty)$ is the function $\phi(s)=\int_{0}^{\infty}f(t)t^{s-1}dt$.  The Mellin and Laplace transforms are closely related: if  $f\in L^1((0,\infty))$ is zero outside of $[0,1]$, then the Mellin transform of $f(x)$ is the Laplace transform of $f(e^{-x})$. It follows that if $f\in L^1((0,\infty))$ is zero outside of $[0,1]$, then the Mellin transform of $f$ is completely monotone on $(0,\infty)$ if and only if  $f$ is non-negative almost everywhere. We will use this fact in our proof of theorem \ref{thm:IDmaj}.

\begin{definition}
A \emph{generalized Dirichlet series} is a series of the form
\[
    \sum_{n=1}^{\infty}a_n e^{-\lambda_n s},
\]
where, herein, we take $a_n, s\in \R$ (in general, they can be complex) and $\{\lambda_n\}$ is a strictly increasing sequence of positive numbers that tends to infinity.

Letting $ \lambda_n=\log n$, we obtain a \emph{Dirichlet series}
\[
    \sum_{n=1}^{\infty}\frac{a_n}{n^s}.
\]
\end{definition}

\section{Majorization}

\begin{definition} \cite{OBNM08} $\ell_1^+$ consists of all $x=\{ x_n \}_{n=1}^{\infty}\in \ell_1(\mathbb{R})$ with the property that $x_n\geq 0$ for all $n\in \mathbb{N}$ and exactly one of the sets $\{ n\in \mathbb{N}:x_n>0 \}$ and $\{ n\in \mathbb{N}:x_n=0 \}$ is finite.\end{definition}

\begin{definition} \cite{OBNM08}
For any $x, y\in \ell_1^+$, we say that $x$ is majorized by $y$, written $x\prec y$, if
\begin{eqnarray*}
\sum_{i=1}^{k}x^{\downarrow}_{i}&\leq &\sum_{i=1}^{k}y^{\downarrow}_{i}\quad  k \in \N\\
\textnormal{and }\sum_{i=1}^{\infty}x^{\downarrow}_{i}&= &\sum_{i=1}^{\infty}y^{\downarrow}_{i},
\end{eqnarray*}
where $x^\downarrow$ symbolizes that the components of $x$ are rearranged in non-increasing order:  $x_i^\downarrow \geq x_{i+1}^\downarrow $ for all $i$.
\end{definition}

The results herein do not require vectors to be in  $\ell_1^+$ specifically; the extension to  $\ell_1$ does not pose a problem other than a slight modification of the definition of majorization (specifically, the addition of the requirement  $\sum_{i=1}^{k}x^{\uparrow}_{i}\geq \sum_{i=1}^{k}y^{\uparrow}_{i};  k \in \N$, where $x^\uparrow$ symbolizes that the components of $x$ are rearranged in non-decreasing order). However, due to possible convergence issues, our results likely cannot be extended to $\ell_\infty$.

Let $\m H$ be a Hilbert space and $\mathfrak{B}(\m H)$ be  the set of all (bounded) linear operators acting on $\m H$. An operator $\Phi:\mathfrak{B}(\m H)\rightarrow \mathfrak{B}(\m H')$ is \emph{trace-decreasing} if $\tr(\Phi(\rho))\leq \tr(\rho)$ for any $\rho\in \mathfrak{B}(\m H)$; an operator $\Phi$ is \emph{completely positive} if the induced mappings $\Phi_d:\mathcal{M}_d\otimes \mathfrak{B}(\m H)\rightarrow \mathcal{M}_d\otimes \mathfrak{B}(\m H')$ defined by $\Phi_d=\operatorname{id}_d\otimes
\Phi$ are positive for all $d\in \N$.

Consider two composite Hilbert spaces $\m H_A\otimes \m H_B$ and  $\m H_A'\otimes \m H_B'$.
A \emph{local operation} is mathematically described as a  trace-decreasing completely positive map:
\[\Phi_A \otimes \Phi_B:\mathfrak{B}(\m H_A\otimes \m H_B)\rightarrow \mathfrak{B}(\m H'_A\otimes \m H'_B)\] that acts separately on each component of the tensor product:
\[
\Phi_A \otimes \Phi_B = (\Phi_A \otimes \operatorname{id}_B)\circ(\operatorname{id}_A \otimes \Phi_B).
\]
Let $\operatorname{diag}(\mathcal{M}_d)$ denote the classical algebra of $d\times d$
diagonal matrices for some $d$.
\emph{Classical communication} is mathematically represented by
\begin{eqnarray*}
\Phi_A:& \mathfrak{B}(\m H_A) \rightarrow  \mathfrak{B}(\m H'_A) \otimes \operatorname{diag}(\mathcal{M}_d)\\
\textnormal{ and/or }\quad\Phi_B:& \mathfrak{B}(\m H_B) \otimes \operatorname{diag}(\mathcal{M}_d) \rightarrow  \mathfrak{B}(\m H'_B)
\end{eqnarray*}

Physically, LOCC is often described as a situation where two users, Alice and Bob, are spatially separated and can only perform local operations on their respective systems and can communicate classically by sending bits through a classical communication channel.

We adopt Dirac bra-ket notation: any unit vector in a Hilbert space $\m H$ will be written as a ``ket'' $\ket{\psi}$; its dual (complex-conjugate transpose) will be written as a ``bra'' $\bra{\psi}$.
\begin{definition}\cite{OBNM08}
Let $\ket{\psi}$ and  $\ket{\phi}$ be unit vectors (states) in $\m H_A\otimes \m H_B$. We say that $\ket{\psi}$ is  \emph{$\epsilon$-convertible} to $\ket{\phi}$ by LOCC if for any $\epsilon>0$, there exists an LOCC operation $\Lambda$ satisfying $||\Lambda(\ket{\psi}\bra{\psi})-\ket{\phi}\bra{\phi}||_{\tr}<\epsilon$, where $||\cdot||_{\tr}$ is the trace norm.
\end{definition}

The concept of $\epsilon$-convertibility allows for the extension of Nielsen's theorem \cite{Nie99}, which gives necessary and sufficient conditions for LOCC transformations,  to the infinite-dimensional setting:
\begin{theorem}\cite{OBNM08}
 $\ket{\psi}$ is  \emph{$\epsilon$-convertible} to $\ket{\phi}$ by LOCC if and only if $\lambda\prec \mu$, where $\mu$ refers to majorization in infinite-dimensional systems, and $\lambda$ and $\mu$ are the vectors of Schmidt coefficients of $\ket{\psi}$ and  $\ket{\phi}$, respectively.
\end{theorem}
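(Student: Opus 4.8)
This result is the infinite-dimensional counterpart of Nielsen's theorem \cite{Nie99}, so the plan is to deduce it from the finite-dimensional statement, exploiting that $\epsilon$-convertibility is, essentially by construction, the trace-norm closure of exact LOCC convertibility together with the density of finite-rank states (this is the route taken in \cite{OBNM08}). Throughout I would write $\op{Tr}_B\ket\psi\bra\psi$ and $\op{Tr}_B\ket\phi\bra\phi$ for the reduced states on $\m H_A$, whose non-increasingly ordered eigenvalue sequences are $\lambda$ and $\mu$, each summing to $1$ because $\ket\psi$ and $\ket\phi$ are unit vectors.

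For the implication that $\lambda\prec\mu$ yields $\epsilon$-convertibility, I would fix $\epsilon>0$ and, using $\mu\in\ell_1$, truncate the Schmidt expansion of $\ket\phi$ after $N$ terms and renormalize to obtain a finite-rank unit vector $\ket{\phi_N}$ with $\||\phi\rangle\langle\phi|-|\phi_N\rangle\langle\phi_N|\|_{\tr}<\epsilon/2$ for $N$ large. I would then produce a finite-rank unit vector $\ket{\psi_M}$ close to $\ket\psi$ whose ordered Schmidt vector is majorized by that of $\ket{\phi_N}$, apply the finite-dimensional Nielsen theorem to obtain an exact LOCC map $\ket{\psi_M}\to\ket{\phi_N}$, embed it, and combine the two approximation errors. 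The hard part here is the construction of $\ket{\psi_M}$: a naive simultaneous truncation of $\lambda$ and $\mu$ need not preserve the majorization inequalities, so one must redistribute (for instance, flatten) the truncated tail of $\lambda$ while keeping the $\ell_1$-cost below a prescribed bound, and then verify the finite majorization relation.

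For the converse, that $\epsilon$-convertibility implies $\lambda\prec\mu$, I would introduce for each $k\in\N$ the functional defined on pure states by $E_k(\ket\chi\bra\chi)=1-\sum_{i=1}^k\lambda_i^\downarrow(\op{Tr}_B\ket\chi\bra\chi)$ and extended to mixed states by the convex roof. Each $E_k$ is an entanglement monotone: non-increasing under LOCC on pure states and convex, hence still non-increasing under LOCC once measurement records are discarded. Choosing LOCC maps $\Lambda_n$ with $\rho_n:=\Lambda_n(|\psi\rangle\langle\psi|)\to|\phi\rangle\langle\phi|$ in trace norm, monotonicity gives $E_k(\rho_n)\le E_k(|\psi\rangle\langle\psi|)=1-\sum_{i=1}^k\lambda_i^\downarrow$, while lower semicontinuity of the convex roof gives $1-\sum_{i=1}^k\mu_i^\downarrow=E_k(|\phi\rangle\langle\phi|)\le\liminf_n E_k(\rho_n)$. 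Hence $\sum_{i=1}^k\lambda_i^\downarrow\le\sum_{i=1}^k\mu_i^\downarrow$ for every $k$, and combined with $\sum_i\lambda_i=\sum_i\mu_i=1$ this is exactly $\lambda\prec\mu$.

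I expect the genuine obstacles to be infinite-dimensional in both directions: in the first, engineering a finite-rank source state whose truncated Schmidt vector still obeys the finite majorization inequalities at small $\ell_1$-cost; in the second, checking that the $E_k$ remain lower semicontinuous and monotone under trace-decreasing completely positive LOCC maps in the absence of compactness. My expectation is that both dissolve once the infinite-dimensional LOCC operations are approximated by finite-dimensional ones on suitable truncations and one passes to the limit, so that the finite-dimensional Nielsen theorem carries the real content.
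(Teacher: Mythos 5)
First, a point of reference: the paper does not prove this theorem at all --- it is imported verbatim from \cite{OBNM08} as background, so there is no in-paper argument to compare yours against. Measured against the proof in \cite{OBNM08} itself, your architecture is the standard one: reduce sufficiency to finite-dimensional Nielsen via finite-rank approximation, and get necessity from a countable family of entanglement monotones. The sufficiency half, which you flag as delicate, in fact closes along the lines you indicate: truncate $\mu$ at level $N$ and renormalize (this only increases its partial sums), then truncate $\lambda$ at some $M\geq N$ and flatten its tail mass $\delta$ into $K$ equal pieces each no larger than $\lambda^{\downarrow}_{M}$. The resulting finite vector agrees with $\lambda^{\downarrow}$ up to index $M$, so the majorization inequalities against the renormalized target hold for $k\leq M$ and are trivial for $k>M\geq N$; the overlap bound $|\langle\psi|\psi_M\rangle|\geq 1-\delta$ together with trace-norm contractivity of the (embedded) finite-dimensional LOCC map controls the total error. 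That direction I would accept as essentially complete once written out.

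The genuine gap is in the converse, and it is exactly the step you wave at: ``lower semicontinuity of the convex roof'' of $E_k$ on infinite-dimensional mixed states is not a routine fact. Convex roofs are infima over pure-state decompositions, and in a non-compact setting an infimum of continuous functions need not be lower semicontinuous; moreover your monotonicity claim for trace-decreasing maps only holds in the averaged form over measurement records, which must be stated before it can be invoked. Since this is the only place where the infinite-dimensionality actually bites, asserting that it ``dissolves in the limit'' leaves the theorem unproved. A way to close it without developing a general theory of roofs: for each $k$ the Ky Fan functional $\sigma\mapsto\sum_{i=1}^{k}\mu^{\downarrow}_{i}(\sigma)=\sup\{\tr(P\sigma):P \textnormal{ a rank-}k\textnormal{ projection}\}$ is $1$-Lipschitz in trace norm, and $\op{Tr}_B$ is a trace-norm contraction; then, taking near-optimal decompositions $\rho_n=\sum_j p_j\ket{\chi_j}\bra{\chi_j}$ and using $\bra{\phi}\rho_n\ket{\phi}\to 1$ to force almost all the weight onto pure states trace-norm close to $\ket{\phi}$, one bounds $E_k(\rho_n)$ from below by $E_k(\ket{\phi}\bra{\phi})-o(1)$ directly, which is the inequality you need. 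Until that (or an equivalent) argument is supplied, the necessity direction of your proposal is a plan rather than a proof.
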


An alternate characterization of infinite majorization found in \cite{OBNM08} will be particularly useful for us.  We first introduce the notation $z^+=\max (z,0)$ whenever $z\in \mathbb{R}$.

\begin{proposition} \cite[Appendix B]{OBNM08} \label{Obprop} Let $a=\{ a_n \}_{n=1}^{\infty}$ and  $b=\{ b_n \}_{n=1}^{\infty}$ both be in $\ell_1^+$.  Then $a\prec b$ if and only if the following two conditions are satisfied:

(1) $\sum_{j=1}^{\infty}(a_j-t)^+ \le \sum_{j=1}^{\infty}(b_j-t)^+$ for all $t>0$.

(2) $\sum_{i=1}^{\infty} a_i =\sum_{i=1}^{\infty} b_i$.

\end{proposition}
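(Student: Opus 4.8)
The plan is to route both directions through the auxiliary function
\[
F_x(t)\;:=\;\sum_{j=1}^{\infty}(x_j-t)^+ ,\qquad x\in\ell_1^+,\ t\ge 0 .
\]
The first observation to record is that for each fixed $t>0$ only finitely many coordinates of $x$ exceed $t$ (because $x\in\ell_1$ and $x_j\ge 0$), so $F_x(t)$ is a \emph{finite} sum; hence $F_x$ is finite-valued, non-increasing, convex and continuous on $[0,\infty)$, with $F_x(0)=\sum_j x_j$. Condition~(2) is literally the second clause in the definition of $\prec$, so in the forward direction it requires no argument, and in the reverse direction it will be invoked once, at the very end.

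For the forward implication, assume $a\prec b$ and fix $t>0$. Let $k=\#\{\,j:a_j>t\,\}$, which is finite. Then, using the $k$-th majorization inequality in the middle step,
\begin{align*}
F_a(t) &= \sum_{j=1}^{k}(a^{\downarrow}_j-t)\;=\;\Bigl(\sum_{j=1}^{k}a^{\downarrow}_j\Bigr)-kt \\
&\le \Bigl(\sum_{j=1}^{k}b^{\downarrow}_j\Bigr)-kt\;=\;\sum_{j=1}^{k}(b^{\downarrow}_j-t)\;\le\;\sum_{j=1}^{k}(b^{\downarrow}_j-t)^+\;\le\;F_b(t).
\end{align*}
This establishes (1), and (2) is immediate from $a\prec b$.

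For the reverse implication, the key step is the identity
\[
\sum_{i=1}^{k}x^{\downarrow}_i\;=\;\min_{t\ge 0}\bigl(kt+F_x(t)\bigr)\qquad(k\in\N),
\]
the infinite-dimensional analogue of the Ky Fan / Lorenz-curve formula. One obtains ``$\le$'' from the pointwise bound $kt+\sum_j(x_j-t)^+\ge kt+\sum_{j=1}^k(x^{\downarrow}_j-t)^+\ge\sum_{j=1}^k x^{\downarrow}_j$, and ``$\ge$'' by evaluating the right-hand side at $t=x^{\downarrow}_k\ge 0$ (noting that coordinates equal to $x^{\downarrow}_k$ contribute nothing to the sum), which also shows the minimum is attained. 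Now, granting (1) and (2), we have $F_a(t)\le F_b(t)$ for every $t\ge 0$: for $t>0$ this is (1), and at $t=0$ it is the equality $\sum_i a_i=\sum_i b_i$ of (2) (equivalently, it follows from (1) by continuity of $F_a,F_b$ at $0$). Adding $kt$ and minimizing over $t\ge 0$ gives $\sum_{i=1}^{k}a^{\downarrow}_i\le\sum_{i=1}^{k}b^{\downarrow}_i$ for all $k\in\N$, while (2) supplies the equality $\sum_i a^{\downarrow}_i=\sum_i b^{\downarrow}_i$; hence $a\prec b$.

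The argument is largely mechanical; the only points demanding care --- and the closest thing to an obstacle --- are the infinite-dimensional bookkeeping: verifying that $F_x(t)$ is a genuinely finite sum for each $t>0$ so that the displayed manipulations are legitimate, checking that the minimum in the Lorenz-type identity is actually attained (at $t=x^{\downarrow}_k$, which may equal $0$ when $x$ has finite support), and handling the boundary value $t=0$, where the hypothesis of the proposition splits into (1) for positive $t$ together with the total-sum equality (2). None of these is deep, but each must be dealt with to make the passage from finite to infinite dimensions rigorous.
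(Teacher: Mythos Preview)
Your proof is correct. Note, however, that the paper does not supply its own proof of this proposition: it is quoted as a known characterization from \cite[Appendix B]{OBNM08} and used as a black box in the proof of Theorem~\ref{thm:IDmaj}. So there is no in-paper argument to compare against.

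For what it is worth, your approach is the standard one and matches in spirit what one finds in the cited reference: the forward direction drops out of the partial-sum inequalities, and the reverse direction is exactly the Hardy--Littlewood--P\'olya/Ky Fan variational identity $\sum_{i=1}^{k}x^{\downarrow}_i=\min_{t\ge 0}\bigl(kt+\sum_j(x_j-t)^+\bigr)$, transported to $\ell_1^+$. The infinite-dimensional bookkeeping you flag (finiteness of $F_x(t)$ for $t>0$, attainment at $t=x^{\downarrow}_k$, and the $t=0$ endpoint via condition~(2)) is handled cleanly.
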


We are now in the position to introduce our result:

\begin{theorem}\label{thm:IDmaj} Let $a=\{ a_n \}_{n=1}^{\infty}$ and  $b=\{ b_n \}_{n=1}^{\infty}$ both be in $\ell_1^+$ and let $\zeta(s)=\sum_{n=1}^{\infty}b_n^s-\sum_{n=1}^{\infty}a_n^s$.
Then $a\prec b$ if and only if
\begin{enumerate}
\item[(i)] $\zeta(1)=0$;
\item[(ii)] $\frac{\zeta(s)}{s(s-1)}$ is completely monotone on $(1, \infty)$.
\end{enumerate}
\end{theorem}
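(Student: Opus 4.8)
\emph{Proof strategy.} The idea is to reduce the statement to the alternate characterization of infinite majorization in Proposition~\ref{Obprop} by exhibiting $\zeta(s)/\bigl(s(s-1)\bigr)$ as a Mellin transform. Note first that condition~(i), $\zeta(1)=0$, is exactly condition~(2) of Proposition~\ref{Obprop} (equality of the two sums, both convergent since $a,b\in\ell_1$); so the real content is the equivalence, granted~(i), between the complete monotonicity in~(ii) and condition~(1) of Proposition~\ref{Obprop}.

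The computational core is the elementary identity
\[
\int_0^\infty (c-t)^+\,t^{s-2}\,dt=\int_0^c (c-t)\,t^{s-2}\,dt=\frac{c^s}{s-1}-\frac{c^s}{s}=\frac{c^s}{s(s-1)},\qquad c>0,\ s>1,
\]
a Beta-function evaluation, valid for $s>1$ since that is when the integral converges at $t=0$. Specializing to $c=b_n$ and $c=a_n$, summing over $n$, and exchanging sum and integral (justified by Tonelli, since every integrand is nonnegative and $\sum_n b_n^s,\sum_n a_n^s$ converge for $s>1$ because $a_n,b_n\to 0$) yields
\[
\frac{\zeta(s)}{s(s-1)}=\int_0^\infty F(t)\,t^{s-2}\,dt,\qquad F(t):=\sum_{n=1}^\infty\bigl[(b_n-t)^+-(a_n-t)^+\bigr],\quad s>1.
\]
After the translation $s\mapsto s-1$, which preserves complete monotonicity (it only shifts the argument at which the derivatives are evaluated), the left-hand side is the Mellin transform of $F$.

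I would then check that $F$ meets the hypotheses of the Mellin--Laplace fact recorded in Section~2: $F$ is continuous (the series $\sum_n(b_n-t)^+$ converges uniformly, being dominated by the constant $\sum_n b_n<\infty$); it lies in $L^1((0,\infty))$ (since $\int_0^\infty\sum_n(b_n-t)^+\,dt=\tfrac12\sum_n b_n^2<\infty$, and likewise for $a$); and, using $a_n,b_n\le 1$, it vanishes outside $[0,1]$. That fact then says $\zeta(s)/(s(s-1))$ is completely monotone on $(1,\infty)$ if and only if $F\ge 0$ almost everywhere on $(0,\infty)$, and continuity of $F$ upgrades this to $F(t)\ge 0$ for every $t>0$, which is precisely condition~(1) of Proposition~\ref{Obprop}. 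Together with the identification of~(i) with condition~(2), Proposition~\ref{Obprop} completes the proof.

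The routine steps are the Beta integral, the Tonelli interchange, and the continuity and $L^1$ bookkeeping for $F$. The point needing real care is the boundedness $a_n,b_n\le 1$: the ``supported in $[0,1]$'' requirement in the Mellin fact is indispensable --- for general $\ell_1^+$ vectors $\zeta(s)/(s(s-1))$ can fail to be completely monotone even when $a\prec b$ (e.g.\ $a=(2,2,0,\dots)\prec b=(3,1,0,\dots)$, where the ratio is strictly increasing in $s$) --- so I would make that boundedness (automatic for vectors of squared Schmidt coefficients) explicit from the outset.
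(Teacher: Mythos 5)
Your proposal is correct and follows the same overall architecture as the paper's proof: both reduce the theorem to Proposition~\ref{Obprop} by exhibiting $\zeta(s)/(s(s-1))$, after the harmless shift $s\mapsto s+1$, as the Mellin transform of $F(t)=\sum_j(b_j-t)^+-\sum_j(a_j-t)^+$ and then invoking the Mellin--Laplace--Bernstein fact from Section~2. Where you differ is in how the integral representation is obtained: you integrate termwise via the elementary identity $\int_0^\infty(c-t)^+t^{s-2}\,dt=c^s/(s(s-1))$ and justify the interchange by Tonelli, whereas the paper introduces the counting functions $A(x)$, $B(x)$, writes $\sum_n a_n^s$ as a Riemann--Stieltjes integral, and integrates by parts twice (using $\zeta(1)=0$ to kill the boundary term at $0$). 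The two computations are equivalent; yours is somewhat more self-contained and makes the $L^1$ and continuity bookkeeping explicit, which the paper leaves implicit.

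The point you raise about the support condition is not mere bookkeeping --- it exposes a genuine gap in the theorem as stated. The criterion ``the Mellin transform is completely monotone if and only if the function is nonnegative a.e.'' is asserted (and is true) only for functions vanishing outside $[0,1]$, yet $F$ vanishes only outside $[0,\sup_n\max(a_n,b_n)]$, and the definition of $\ell_1^+$ used here imposes no normalization. Your counterexample is valid: for $a=(2,2,0,\dots)$ and $b=(3,1,0,\dots)$ one has $a\prec b$ and $\zeta(1)=0$, but $f(2)=1<2=f(3)$, so $f$ is not even non-increasing on $(1,\infty)$ and hence not completely monotone. The paper's proof applies the Mellin fact without verifying the support hypothesis, so the theorem needs the additional assumption $a_n,b_n\le 1$ for all $n$ (automatic when $\|a\|_1=\|b\|_1=1$, as for vectors of squared Schmidt coefficients, which is the intended physical setting). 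With that hypothesis made explicit, your argument is complete and correct.
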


\begin{proof}
For any $t>0$, we have: $\sum_{j=1}^{\infty}(a_j-t)^+ =\sum_{n:a_n>t} \int_t^{a_n} dx =\int_{t}^{\infty} A(x) dx$ where $A(x)$ is the number of elements of $\{a_n\}$ greater than or equal to $x$. (Note: since $a\in \ell_1$, $A(x)$ is finite on $(0,\infty )$ and $A(x)=0$ for $x$ large).

Let $Re$ $s>1$.  Now we can use the theory of Riemann-Stieltjes integration to represent our zeta function as follows:
\[\sum_{n=1}^{\infty}a_n^s=\int_{0}^{\infty}x^sdA=-s\int_{0}^{\infty}A(x)x^{s-1}dx,\]
 where the latter equality follows from integration by parts. We note that  $xA(x)\le \Vert a\Vert_1$ and hence $x^sA(x) \rightarrow 0$  when $x\rightarrow 0$.  We also have $x^sA(x)\rightarrow 0$ when $x\rightarrow \infty$ since $A(x)=0$ if $x\geq \sup_{n\in \mathbb{N}}$\ $a_n$.  This also means that these integrals are actually finite.

If we let $B(x)$ be the number of elements of $\{b_n\}$ greater than or equal to $x$, then we similarly obtain $\sum_{j=1}^{\infty}(b_j-t)^+ =\int_{t}^{\infty} B(x) dx$ and
\[\sum_{n=1}^{\infty}b_n^s=-s\int_{0}^{\infty}B(x)x^{s-1}dx.\]

Now suppose $\sum_{i=1}^{\infty} a_i =\sum_{i=1}^{\infty} b_i$, then using integration by parts we get:
\begin{eqnarray*}
\sum_{n=1}^{\infty}b_n^s-\sum_{n=1}^{\infty}a_n^s&=&-s\int_{0}^{\infty}(B(x)-A(x))x^{s-1}dx\\
&=&s(s-1)\int_{0}^{\infty}[\sum_{j=1}^{\infty}(b_j-t)^+-\sum_{j=1}^{\infty}(a_j-t)^+]t^{s-2}dt.
\end{eqnarray*}

Let $f(s)=\frac{\sum_{n=1}^{\infty}b_n^s-\sum_{n=1}^{\infty}a_n^s}{s(s-1)}=\frac{\zeta(s)}{s(s-1)}$.  We note that since $a,b\in \ell_1^+$, $f(s)$ converges absolutely on the half-plane $\{ z: Re$ $z>1\}$. Then if $\sum_{i=1}^{\infty} a_i =\sum_{i=1}^{\infty} b_i$, it follows that $f(s+1)$ is the Mellin  transform of $[\sum_{j=1}^{\infty}(b_j-t)^+-\sum_{j=1}^{\infty}(a_j-t)^+]$.  Therefore $f(s)$ is completely monotone on $(1,\infty)$ if and only if $f(s+1)$ is completely monotone on $(0,\infty)$ which occurs if and only if $[\sum_{j=1}^{\infty}(b_j-t)^+-\sum_{j=1}^{\infty}(a_j-t)^+]\geq 0$ for all positive $t$.  Since the latter statement plus the condition $\sum_{i=1}^{\infty} a_i =\sum_{i=1}^{\infty} b_i$ is equivalent to majorization by proposition \ref{Obprop}, $a\prec b$ if and only if both $\zeta(1)=0$ and $\frac{\zeta(s)}{s(s-1)}$ is completely monotone on $(1, \infty)$.
\end{proof}

This result can be seen as the infinite-dimensional version of \cite[theorem 3.1]{PP13}.  The proof is significantly different from the corresponding finite result, though it is similar in spirit.

\section{Discussion}

 Trumping \cite{JoPl99} is a partial order on vectors in $\R^d$ that generalizes the more familiar concept of majorization, allowing comparisons between a larger number of vectors than was possible under majorization. That is, trumping allows for a greater number of successful entanglement transformation  procedures. Moreover, these procedures are made possible by way of a catalyst state that remains unchanged after the LOCC protocol.

Partial results characterizing when trumping occurs include linking trumping with $\ell^p$ inequalities and  the von Neumann entropy \cite{Tur07} and  an equivalent characterization using a family of additive Schur-convex functions  \cite{Kli2004,Kli2007}, as well as an extension of these (equivalent) characterizations to higher order convex functions using generalized Dirichlet polynomials, Mellin transforms, and completely monotone functions \cite{PP13}. The latter greatly simplifies the proof of  the result found in \cite{Tur07}.

In infinite dimensions, we can define trumping as follows:
\begin{definition}
For any $x, y\in \ell_1^+$, we say that $x$ is  \emph{trumped} by $y$, written $x\prec_T y$,  if there exists a unit vector $c\in \ell_1^+$ with all positive components
 such that $x\otimes c\prec y\otimes c$.
\end{definition}
The \emph{catalyst}  $c$ is allowed to have infinite length in this setting, though it may be the case that it has finite length.

Motivated by the main result from \cite{AuNe08}, which characterized the closure of the set of all $x$ trumped by a fixed $y$, where $x$ and $y$ were finitely-supported infinite-dimensional vectors, we introduce the following conjecture:

\begin{conjecture}
 Let $\zeta(s)$ be a generalized Dirichlet series absolutely convergent on the half-plane $\{z: \operatorname{Re} z\ge 1\}$ with a simple zero at $s=1$.  Then $\zeta(s)$ is positive on $(1,\infty)$ if and only if there exists a generalized Dirichlet series $\zeta_2(s)\not\equiv 0$ with positive coefficients absolutely convergent on the half-plane $\{z:  \operatorname{Re} z\ge 1\}$ such that $\frac{\zeta(s)\zeta_2(s)}{s(s-1)}$ is completely monotone on $(1,\infty)$.
\end{conjecture}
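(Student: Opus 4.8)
The plan is to treat the two directions of the equivalence very asymmetrically. The implication ``completely monotone product $\Rightarrow$ $\zeta$ positive on $(1,\infty)$'' should be short and rest on Bernstein's theorem, while the converse is an infinite-dimensional catalysis (``trumping'') statement in which essentially all the difficulty is concentrated.

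For the easy direction, suppose $\zeta_2\not\equiv 0$ has positive coefficients, is absolutely convergent on $\{\operatorname{Re}z\ge 1\}$, and that $g(s):=\zeta(s)\zeta_2(s)/(s(s-1))$ is completely monotone on $(1,\infty)$. Since $\zeta$ has a \emph{simple} zero at $s=1$ it is not the zero function, so it is nonzero at some real point $s_1>1$; as $\zeta_2(s_1)>0$ (all coefficients positive) and $s_1(s_1-1)>0$, we get $g(s_1)\ne 0$, i.e.\ $g\not\equiv 0$. Applying Bernstein's theorem to $u\mapsto g(u+1)$ represents $g$ as a Laplace transform $g(s)=\int_0^{\infty}e^{-(s-1)t}\,d\nu(t)$ with $\nu$ a positive measure; were $g(s_0)=0$ for a single $s_0>1$, strict positivity of the integrand would force $\nu\equiv 0$ and hence $g\equiv 0$, a contradiction. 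Therefore $g>0$ on $(1,\infty)$, so $\zeta(s)\zeta_2(s)=s(s-1)g(s)>0$ there, and dividing by $\zeta_2(s)=\sum_k c_k e^{-\lambda_k s}>0$ gives $\zeta(s)>0$ on $(1,\infty)$.

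For the hard direction one assumes $\zeta>0$ on $(1,\infty)$ and must build the catalyst $\zeta_2$. I would first record the analytic data forced by the hypotheses — $\zeta(1)=0$, and (from simplicity of the zero together with positivity) that $\zeta'(1)=\lim_{s\to 1^+}\zeta(s)/(s-1)$ exists and is strictly positive — these being the infinite-dimensional counterparts of the power-sum/entropy inequalities that govern catalytic majorization in finite dimensions \cite{Kli2004,Kli2007,Tur07} and of the conditions describing the closure of the trumping cone in \cite{AuNe08}. The construction I would attempt: approximate $\zeta$ by the Dirichlet polynomials $\zeta^{(N)}$ obtained from its partial sums, apply the finite catalytic majorization result in the Dirichlet-polynomial / Mellin-transform form of \cite{PP13} to get finite catalysts $\zeta_2^{(N)}$ making $\zeta^{(N)}\zeta_2^{(N)}/(s(s-1))$ completely monotone, and then assemble a single \emph{infinite} catalyst — roughly a weighted direct sum of the $\zeta_2^{(N)}$ with carefully chosen weights — so that the limiting product $\zeta(s)\zeta_2(s)/(s(s-1))$ is itself completely monotone on $(1,\infty)$, with the closure theorem of \cite{AuNe08} controlling the approximation error. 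In the vector case $\zeta(s)=\sum_n b_n^s-\sum_n a_n^s$, $a,b\in\ell_1^+$, this has the concrete shape suggested by the proof of Theorem~\ref{thm:IDmaj}: writing $G(t)=\sum_j(b_j-t)^+-\sum_j(a_j-t)^+$, a catalyst $c\in\ell_1^+$ succeeds precisely when $\sum_k c_k\,G(t/c_k)\ge 0$ for all $t>0$, so one is looking for a discrete positive ``mollifier'' $\{c_k\}$ averaging the rescaled copies of $G$ into a non-negative function, and the hypothesis $\zeta>0$ on $(1,\infty)$ says exactly that every moment $\int_0^{\infty}G(t)t^u\,dt$, $u>-1$, is positive.

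The main obstacle is exactly this last step: producing an \emph{honest} infinite catalyst $\zeta_2$ — with positive coefficients and absolute convergence on $\{\operatorname{Re}z\ge 1\}$ — rather than only a sequence of ever-better finite catalysts, and then showing the resulting function is truly completely monotone, not merely a pointwise limit of completely monotone functions (which would return only the positivity we began with). A second, structural difficulty is that the conjecture allows an arbitrary generalized Dirichlet series $\zeta$, not necessarily of the form $\sum_n b_n^s-\sum_n a_n^s$ with $a,b\in\ell_1^+$, and a catalyst $\zeta_2$ not necessarily realizable as some $\sum_k c_k^s$; a reasonable first target — closely parallel to Theorem~\ref{thm:IDmaj} — is to prove the conjecture under precisely those restrictions, where the Riemann--Stieltjes/Lorenz-curve apparatus is available, and to address the general statement only afterwards.
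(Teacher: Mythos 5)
The statement you are trying to prove is posed in the paper as an open \emph{conjecture}: the authors give no proof of either direction, only a reduction (via Theorem \ref{thm:IDmaj} and the definition of infinite-dimensional trumping) to a statement about the existence of catalysts. So the relevant question is not whether your route matches the paper's, but whether your proposal constitutes a proof. Your ``if'' direction does: complete monotonicity plus the Bernstein representation forces $\zeta(s)\zeta_2(s)/(s(s-1))$ to be either identically zero or strictly positive on $(1,\infty)$, and the identically-zero case is excluded since $\zeta\not\equiv 0$ (it has a \emph{simple} zero at $1$) and $\zeta_2>0$ on the real axis because its coefficients are positive. That half is complete and correct.

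The ``only if'' direction is where all of the content of the conjecture lives, and your proposal does not close it --- as you acknowledge. Two specific remarks. First, the obstacle you name is slightly misplaced: a pointwise limit of completely monotone functions on an open interval is automatically completely monotone (pass to a weakly convergent subsequence of the representing measures), so ``limit of completely monotone functions'' is not the difficulty. The difficulty is producing a \emph{single} admissible $\zeta_2$ --- nonzero, positive coefficients, absolutely convergent on $\{\operatorname{Re} z\ge 1\}$ --- such that $\zeta\zeta_2/(s(s-1))$ is the relevant limit: the finite catalysts $\zeta_2^{(N)}$ coming from the partial sums have no reason to stabilize, and any ``weighted direct sum'' of them alters the product at every stage rather than converging to a fixed target. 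Second, the finite-dimensional input you want to bootstrap from is itself a \emph{closure} statement: \cite{AuNe08} and \cite[corollary 4.11]{PP13} characterize the closure of the set of vectors trumped by a fixed $y$, not that set itself, so positivity of $\zeta$ on $(1,\infty)$ only places you in the closure of the catalyzable set; extracting an honest catalyst from membership in the closure is precisely the open problem, and is presumably why the authors state this as a conjecture rather than a theorem. Your suggestion to first attack the restricted case $\zeta(s)=\sum_n b_n^s-\sum_n a_n^s$, $\zeta_2(s)=\sum_k c_k^s$ with $a,b,c\in\ell_1^+$ is reasonable and matches the authors' own framing, but even that restricted version is not established by the sketch.
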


If we let $a=\{ a_n \}_{n=1}^{\infty}$ and  $b=\{ b_n \}_{n=1}^{\infty}$ both be in $\ell_1^+$ and let $\zeta(s)=\sum_{n=1}^{\infty}b_n^s-\sum_{n=1}^{\infty}a_n^s$.
Then, using theorem \ref{thm:IDmaj} and the definition of infinite-dimensional trumping, the statement of this conjecture reduces to: $\zeta(s)$ is positive on $(1,\infty)$ if and only if there exists a catalyst $c=\{c_n\}_{n=1}^\infty\in \ell_1^+$ with corresponding generalized Dirichlet series $\zeta_2(s)=\sum_{n=1}^{\infty}c_n^s$  such that  $x\otimes c\prec y\otimes c$; that is, such that $x$ is trumped by $y$. The characteristic of $\zeta(s)$ being  positive on $\R$ with  simple zeros at $s=0$ and $s=1$    was shown to be equivalent to trumping  in the finite-dimensional setting \cite[proposition 3.3]{PP13}. This conjecture, if it holds, would be a Dirichlet series version of the Dirichlet polynomial result \cite[corollary 4.11]{PP13} and would effectively characterize infinite-dimensional trumping.

\section*{Acknowledgements}   R.P. was supported by NSERC Discovery Grant 400550.


\begin{thebibliography}{50}



\bibitem{AuNe08} G.\ Aubrun and I.\ Nechita, \emph{Catalytic Majorization and $\ell_p$ Norms}, Comm.\ Math.\ Phys.\ \textbf{278}, no.\ 1 (2008), pp.\ 133--144.

\bibitem{DaKl01} S.\ K.\  Daftuar and M.\ Klimesh, \emph{Mathematical structure of entanglement catalysis}, Phys.\ Rev.\ A (3) \textbf{64} no.\ 4 (2001), 042314.



\bibitem{Gav05} I.\ Gavrea, \emph{Some Remarks on a Paper by A.\ McD.\ Mercer}, J.\ Inequal.\ Pure Appl.\ Math., \textbf{6}(1) Art.\ 26, (2005).

\bibitem{JoPl99}D.\ Jonathan and M.\ B.\ Plenio, \emph{Entanglement-Assisted Local Manipulation of Pure Quantum States}, Phys.\ Rev.\ Lett.\ \textbf{83} (1999), pp.\ 3566--3569.


\bibitem{Kli2004} M.\ Klimesh, \emph{Entropy Measures and Catalysis of Bipartite
Quantum State Transformations}, Proceedings 2004 IEEE International Symposium on Information Theory, 2004
\bibitem{Kli2007} M.\ Klimesh, \emph{Inequalities that Collectively Completely Characterize
the Catalytic Majorization Relation}, Available at: http://arxiv.org/abs/0709.3680v1
\bibitem{KPP12} D.\ W.\ Kribs, R.\ Pereira, and S.\ Plosker, \emph{Trumping and Power Majorization}, 2012.
\bibitem{LoPr01} H.-K.\ Lo and S.\ Popescu, \emph{Concentrating Entanglement by Local Actions: Beyond Mean Values}, Phys.\ Rev.\ A, \textbf{63} (2001), 022301.

\bibitem{MOA11} A.\ W.\ Marshall, I.\ Olkin, and B.\ C.\ Arnold, \emph{Inequalities: Theory of Majorization and its Applications}, 2nd ed. Springer: New York, 2011.

\bibitem{Mer05} A.\ McD.\ Mercer, \emph{Polynomials and Convex Sequence Inequalities}, J.\ Inequal.\ Pure Appl.\ Math., \textbf{6}(1), Art.\ 8 (2005).
\bibitem{Neu99} A.~Neumann, \emph{An infinite-dimensional version of the Schur-Horn convexity
theorem}, J.\ Funct.\ Anal., \textbf{161}(2) (1999), pp.~418--451.
\bibitem{Nie99} M.\ Nielsen, \emph{Conditions for a Class of Entanglement Transformations}, Phys.\ Rev.\ Lett.\ \textbf{83}, no.\ 2 (1999), pp.\ 436--439.

\bibitem{Nie05} M.\ Niezgoda, \emph{An Extension of Results of A.\ McD.\ Mercer and I.\ Gavrea}, J.\ Inequal.\ Pure Appl.\ Math.\ \textbf{6}(4) Art.\ 107, (2005).

\bibitem{OBNM08} M.~Owari, S.~L.~Braunstein, K.~Nemoto, and M.~Murao, \emph{$\epsilon$-convertibility of entangled states and extension of Schmidt rank in infinite-dimensional systems},  Quantum Inf.\ Comput., \textbf{8} (2008),   0030-0052.




\bibitem{PP13}  R.~Pereira and S.~Plosker, \textit{Dirichlet polynomials, majorization, and trumping}, J.\ Phys.\ A: Math.\ Theor.\ \textbf{46} (2013), 225302.

\bibitem{Tur07} S.\ Turgut, \emph{Catalytic Transformations for Bipartite Pure States}, J.\ Phys.\ A: Math.\ Theor.\ \textbf{40} (2007), pp.\ 12185--12212.

\end{thebibliography}
\end{document}